\pdfoutput=1 

\RequirePackage{snapshot}

\RequirePackage{sty/bootstrap}
\AppendInputPath{{sty/}{sty/img/}{fig/}{build/}}

\RequirePackage[twocolumn]{IEEEtranSetup}
\documentclass[conference,final]{IEEEtran}

\ProvideIf[true]{draft}
\IfClassLoadedWithOptionsTF{IEEEtran}{final}{\draftfalse}{}

\ProvideIf[false]{ieee}  
\ProvideIf[false]{ieeepreview}
\ProvideIf[false]{blind} 
\ProvideIf[true]{arxiv} 

\overfullrule=2cm
\frenchspacing

\PassOptionsToPackage{compress}{boilerplate}
\ifieee 
  \usepackage[hyperref=false,bibtex]{boilerplate}
\else
  \ifarxiv
    \usepackage[biblatex]{boilerplate}
    
  \else
    \usepackage[bibtex]{boilerplate}
  \fi
\fi

\usepackage[capitalize]{cleveref}
\crefname{equation}{}{}
\Crefname{equation}{Equation}{Equations}

\addbibresource{short.bib}
\addbibresource{bib.bib}

\NewDocumentCommand \arxivOrcid {sm} {%
  \IfBooleanTF{#1}{,\hspace{-3pt}}{}%
  \ifarxiv\orcidlink{#2}\fi%
}

\title{%
  Deterministic Identification \\ Over Multiple-Access Channels
}
\date{\today}
\author{
\ifblind\else
  \IEEEauthorblockN{%
    Johannes~Rosenberger\arxivOrcid*{0000-0003-2267-3794}
    Abdalla~Ibrahim\arxivOrcid*{0009-0008-7646-6276}
    Christian~Deppe\arxivOrcid*{0000-0002-2265-4887}
    Roberto~Ferrara\arxivOrcid{0000-0002-1991-3286}
  }%
  \IEEEauthorblockA{
    TUM School of Computation, Information and Technology,
    Technical University of Munich,
    \\
    Email: \{\tumail{johannes.rosenberger}%
      , \tumail{abdalla.m.ibrahim}%
      , \tumail{christian.deppe}%
      , \tumail{roberto.ferrara}%
    \}@tum.de
  }
\fi
}

\makeatletter
\hypersetup{
  pdftitle = \string\@title,
  pdfauthor = {},
  pdfkeywords = {},
  pdflang = {English},
}
\makeatother


\newcommand{\code}{\cC}
\NewDocumentCommand{\defname}{om}{\emph{#2}\IfValueT{#1}{~(#1)}}

\newcommand{\capacity}{\mathsf{C}}
\newcommand{\capT}{\capacity_\textsf{T}}
\newcommand{\capDI}{\capacity_\textsf{DI}}

\DeclareMathOperator{\Bern}{Bern}
\newcommand{\costIA}{\frac{1}{n} \sum_{i=1}^n \bm\phi}

\usepackage{tikz}
\newif\ifqtikz\qtikzfalse
\usetikzlibrary{math}
\usetikzlibrary{fpu}

\pgfkeys{/pgf/fpu}
\pgfkeys{/pgf/fpu/output format=fixed}
\tikzmath{
  function fold(\p, \q) { return \p*(1-\q) + \q*(1-\p); };
  function Hbin(\p) { return - \p * log2(\p) - (1 - \p)*log2(1-\p); };
  function Ibsc(\p, \q) { return Hbin(fold(\p, \q)) - Hbin(\q)); };
  function Ibec(\p, \q) {
    return - \p*(1-\q) * log2(\p*(1-\q)) - (1-\p)*(1-\q) * log2((1-\p)*(1-\q)) - \p*\q * log2(\p*\q);
  };
  function IgaussC(\s) { return log2(1 + \s); };
  function IgaussR(\s) { return IgaussC(\s)/2; };
  function IzChannel(\p, \d) { return Hbin(\p * \d) - \p * Hbin(\d); };
  function zChannelPopt(\d) {
    real \g;
    let \g = (1-\d)^((1-\d)/\d);
    return \g/(1 + \d*\g);
  };
}
\pgfkeys{/pgf/fpu=false}

\providecommand \ifqtikz {\iftrue}

\usetikzlibrary{fpu} 
\usetikzlibrary{math} 

\ifqtikz

\tikzset{
  font=\sffamily,
  rateBound/identification/.style = {color=identification},
  rateBound/transmission/.style = {dashed,color=transmission}
}

\fi

\newcommand \rateRegionBinAdderModTwo [2][scale=5]{

\providecolor{identification}{named}{blue}
\providecolor{transmission}{named}{red}

\pgfkeys{/pgf/fpu}
\pgfkeys{/pgf/fpu/output format=fixed}
\tikzmath{
 real \ticklen, \pcrossover, \capBSC;
 \ticklen = 0.03; 
 \pcrossover = #2;
 \capBSC = Ibsc(0.5, \pcrossover);
}
\pgfkeys{/pgf/fpu=false}

\begin{tikzpicture}[auto,thick,#1]
  \draw[->] (0,0) -- (0, {1+2*\ticklen}) node[above] {$R_2$};
  \draw[->] (0,0) -- ({1+2*\ticklen}, 0) node[right] {$R_1$};

  \draw (0, 1) -- +(-\ticklen, 0) node[left] {$1$};
  \draw (1, 0) -- +(0, -\ticklen) node[below] {$1$};
  \draw (0, \capBSC) -- (-\ticklen, \capBSC) node[above left,yshift=1.5em,rotate=90] {$1-H_2(\pcrossover)$};
  \draw (\capBSC, 0) -- (\capBSC, -\ticklen) node[below left,xshift=1.2em] {$1-H_2(\pcrossover)$};

  \draw[rateBound/identification] (0, 1) -- (1,1) node[below left] {DI} -- (1, 0);
  \draw[rateBound/transmission] (0, \capBSC) -- (\capBSC, 0) node[midway,sloped,below] {Transmission};
\end{tikzpicture}
}

\ifqtikz
\rateRegionBinAdderModTwo{0.1}
\fi

\providecommand \ifqtikz {\iftrue}

\usetikzlibrary{fpu} 
\usetikzlibrary{math} 

\ifqtikz

\tikzset{
  font=\sffamily,
  rateBound/identification/.style = {color=identification},
  rateBound/transmission/.style = {dashed,color=transmission}
}

\fi

\newcommand \rateRegionBinAdderModThreeNoiseless [1][scale=5]{

\providecolor{identification}{named}{blue}
\providecolor{transmission}{named}{red}

\pgfkeys{/pgf/fpu}
\pgfkeys{/pgf/fpu/output format=fixed}
\tikzmath{
 real \ticklen, \pcrossover, \capBSC;
 \ticklen = 0.03; 
}
\pgfkeys{/pgf/fpu=false}

\begin{tikzpicture}[auto,thick,#1]
  \draw[->] (0,0) -- (0, {1+2*\ticklen}) node[above] {$R_2$};
  \draw[->] (0,0) -- ({1+2*\ticklen}, 0) node[right] {$R_1$};

  \draw (0, 1) -- +(-\ticklen, 0) node[left] {$1$};
  \draw (1, 0) -- +(0, -\ticklen) node[below] {$1$};
  \draw (0, 0.5) -- (-\ticklen, 0.5) node[left] {$0.5$};
  \draw (0.5, 0) -- (0.5, -\ticklen) node[below] {$0.5$};

  \draw[dotted, thin] (0, 0.5) -- (1, 0.5) (0.5, 0) -- (0.5, 1);

  \draw[rateBound/identification] (0, 1) -- (1,1) node[below left] {DI} -- (1, 0);
  \draw[rateBound/transmission] (0.5, 1) -- (1, 0.5) node[midway,below left] {Transmission};
\end{tikzpicture}
}

\ifqtikz
\rateRegionBinAdderModThreeNoiseless
\fi

\providecommand \ifqtikz {\iftrue}

\usetikzlibrary{fpu} 
\usetikzlibrary{math} 

\ifqtikz

\tikzset{
  font=\sffamily,
  rateBound/identification/.style = {color=identification},
  rateBound/transmission/.style = {dashed,color=transmission}
}

\fi

\newcommand \rateRegionBinMul [2][scale=5]{

\providecolor{identification}{named}{blue}
\providecolor{transmission}{named}{red}

\pgfkeys{/pgf/fpu}
\pgfkeys{/pgf/fpu/output format=fixed}
\tikzmath{
 real \ticklen, \pcrossover, \capBSC;
 \ticklen = 0.03; 
 \pcrossover = #2;
 \capBSC = Ibsc(0.5, \pcrossover);
}
\pgfkeys{/pgf/fpu=false}

\begin{tikzpicture}[auto,thick,#1]
  \draw[->] (0,0) -- (0, {1+2*\ticklen}) node[above] {$R_2$};
  \draw[->] (0,0) -- ({1+2*\ticklen}, 0) node[right] {$R_1$};

  \draw (0, 1) -- +(-\ticklen, 0) node[left] {$1$};
  \draw (1, 0) -- +(0, -\ticklen) node[below] {$1$};

  \draw (0, \capBSC) -- (-\ticklen, \capBSC) node[above left,yshift=1.5em,rotate=90] {$1-H_2(\pcrossover)$};
  \draw (\capBSC, 0) -- (\capBSC, -\ticklen) node[below left,xshift=1.2em] {$1-H_2(\pcrossover)$};

  \draw[rateBound/identification] (0, 1) -- (1,0) node[midway,sloped,below] {DI};
  \draw[rateBound/transmission] (0, \capBSC) -- (\capBSC, 0) node[midway,sloped,below] {Transmission};
\end{tikzpicture}
}

\ifqtikz
\rateRegionBinMul{0.1}
\fi

\colorlet{identification}{blue}
\colorlet{transmission}{red}
\tikzset{
  font = {\sffamily\footnotesize}, 
  rateBound/identification/.style = {color=identification},
  rateBound/transmission/.style = {dashed,color=transmission}
}

\begin{document}

\ifonecol
\begingroup
\def\cmd#1{\texttt{\textbackslash{}#1}}
\fxwarning[author=JR]{
  The \textsf{fixme} package is loaded, so you can make editorial
  notes with \texttt{\cmd{fxnote}[author=<author>]\\\{<margin comment>\}},
  \texttt{\cmd{fxnote*}[author=<author>]\\\{<comment>\}\{<body text to highlight>\}},
  or, for notes that have to be resolved before submission,
  \cmd{fxfatal} and \cmd{fxfatal*} are used like \cmd{fxnote}.
  In the end of the document, there is also a list of corrections.
  The final submission should get the class option \texttt{final}. This
  hides edit notes and the list of corrections,
  and it produces a compile error for each remaining \cmd{fxfatal} calls.
}
\fxnote[author=JR]{
  Please add shorthands to \texttt{local-shorthands.tex}.
}
\endgroup
\fi

\begin{anfxnote}[author=JR]{Macro conventions}
  \section*{README}
  \label{README}
  \raggedright
  Please do not use \verb|\mathbf|, which does not work for all (math) symbols,
  but instead use the command \verb|\bm|, which is more flexible.
  Mixing both leads to inconsistent output,
  e.g. \verb|\mathbf{X, \alpha}| $\mapsto \mathbf{X, \alpha}$,
  while \verb|\bm{X, \alpha}| $\mapsto \bm{X, \alpha}$.
  The files \verb|sty/shorthands.sty| and \verb|local-shorthands.tex|
  define many shorthands, e.g.
  
  \begin{tabular}{ll}
    \verb|\bX| & $\bX$ \\
    \verb|\bbx| & $\bbx$ \\
    \verb|\cX| & $\cX$ \\
    \verb|\bcX| & $\bcX$ \\
    \verb|\bbN| & $\bbN$ \\
    \verb|\bbR| & $\bbR$ \\
    \verb|\expect| & $\expect$ \\
    \verb|\One| & $\One$ \\
    \verb|\ind{a}| & $\ind{a}$ \\
    \verb|\set{a}| & $\set{a}$ \\
    \verb|\tup{a}| & $\tup{a}$ \\
    \verb|\paren{a}| & $\paren{a}$ \\
    \verb|\intv{a}| & $\intv{a}$ \\
    \verb|\brack{a}| & $\brack{a}$ \\
    \verb|\abs{a}| & $\abs{a}$ \\
    \verb|\card{\cX_1}| & $\card{\cX_1}$ \\
    \verb|\code| & $\code$ \\
    \verb|\capacity| & $\capacity$ \\
    \verb|\capDI| & $\capDI$.
  \end{tabular}
  
  \noindent
  Please define your own shorthands in \verb|local-shorthands.tex|.
\end{anfxnote}

\maketitle


\begin{abstract}
  Deterministic identification over
  $K$-input multiple-access channels with average input cost constraints
  is considered.
  The capacity region for deterministic identification
  is determined for an average-error criterion,
  where arbitrarily large codes are achievable.
  For a maximal-error criterion,
  upper and lower bounds on the capacity region are derived.
  The bounds coincide if all average partial point-to-point channels are
  injective under the input
  constraint, i.e. all inputs at one terminal are mapped to distinct
  output distributions, if averaged over the inputs at all other terminals.
  The achievability is proved by treating the MAC as an arbitrarily varying
  channel with average state constraints.
  For injective average channels,
  the capacity region is a hyperrectangle.
  The modulo-2 and modulo-3 binary adder MAC are presented as examples of
  channels which are injective under suitable input constraints.
  The binary multiplier MAC is presented as an example of a non-injective
  channel, where the achievable identification rate region
  still includes the Shannon capacity region.
\end{abstract}

\section{Introduction}
One of the most persistent aspects in modern communication systems is the ever-increasing need for higher rates for a variety of complex and interrelated tasks. In some tasks and applications,
\defname[ID]{identification} of messages~\cite{jaja1985identification,ahlswedeDueck1989id1,ahlswede2021identification_probabilistic_models} is a potential alternative communication paradigm to the usual data transmission
as studied by Shannon~\cite{shannon1948it0}.
Shannon's model requires each receiver in a multi-user communication system to decode their own
received signals and to estimate which
message was encoded by the sender.
In ID, the receiver is only interested in answering the
“Yes or No”-question
whether a specific message was encoded or not,
performing only a binary hypothesis test.
One surprising consequence is that due to the simplification
to a hypothesis testing problem, ID code sizes
can be doubly exponential in the block length.
However, this scaling is only achieved by ambiguous encoding using randomness.
Using deterministic encoders, the number of messages is upper-bounded
by the number of codewords
\cite{salariseddighPeregBocheDeppe2022det_ID_powerConstraints_tit,ahlswedeCai1999detID},
which scales exponentially for discrecte channels, as in Shannon's transmission paradigm.
Nevertheless, the achievable rates strictly exceed the Shannon capacity for
noisy memoryless channels.
This was first observed in~\cite{jaja1985identification}
for the binary symmetric channel.
Over continuous channels, \defname[DI]{deterministic identification} code sizes scale in the
order of $n^n$~\cite{SJPBDS23, SJ21, SP20}.
Super-exponential codes can also be achieved for DI if
local randomness
can be generated using resources such as feedback
\cite{feedback} or sensing \cite{sensing},
making it effectively a randomized ID.
Applications are expected mainly in
authentication tasks~\cite{moulin2001watermarking_information_theory,steinbergMerhav2001id_watermarking,steinberg2002id_watermarking,ahlswedeCai2006watermarking} and event-triggered systems.
Possible applications include also molecular
communication~\cite{SJPBDS23}
and further 6G applications~\cite{CB21}.

We consider the task of
deterministic (non-randomized) ID with $K$ senders and one receiver via a
\defname[$K$-MAC]{multiple-access channel},
a channel with $K$ inputs and one output.
A MAC can be used to model cooperative senders but also to model jamming attacks by a malevolent sender,
in which case the MAC is usually called an arbitrarily-varying channel~\cite{lapidothNarayan1998uncertainty}.
The MAC was first analyzed for transmission in~\cite{A71, L72},
and surveys are given in~\cite{M88} and~\cite[Chapter 4]{elgamalKim2011network_it}.
The randomized ID capacity of the $K$-MAC was determined in~\cite{ahlswede2008gtid_updated}.

Here, we determine the DI capacity region of the $K$-MAC
under an average-error criterion,
where arbitrarily large codes are possible,
since codewords can be multiply assigned.
Under a maximal-error criterion,
we derive upper and lower bounds on the capacity region
for DI over the $K$-MAC,
with average input cost constraints.
The bounds coincide if all partial point-to-point
channels are injective, if averaged over the input types
for the other channel inputs. Then, the capacity region
is a hyperrectangle in $\bbR^K$.
To prove the main result, we derive a lower bound on the
achievable rates over arbitrarily-varying channels with
state constraints.
We compare the DI and transmission capacity regions
of binary adder and multiplier channels.
In general, for the MAC, the very constrained task of DI
can be solved more efficiently than message transmission.


\section{Setup and preliminaries}

We consider DI over a
\defname[$K$-MAC]{discrete memoryless $K$-input multiple-access channel}.
There are $K$ senders, where each sender
$k \in [K] \coloneqq \set{1,\dots,K}$
emits a word $x^n_k \in \cX_k^n$ of letters from an input alphabet $\cX_k$.
At the receiver, the channel outputs a word $y^n \in \cY^n$ of letters from
the output alphabet $\cY$.
Bold font denotes sequences, where each element corresponds to one sender.
The tuple of all inputs is denoted by
$\bbx^n = (x_1^n,\dots,x_K^n) \in \bcX^n \coloneqq \cX_1^n \times \dots \times \cX_K^n$,
where $\bcX = \cX_1 \times \dots \times \cX_K$,
and for all $i \in [n]$,
we denote by $\bbx_i \coloneqq (x_{1,i},\dots, x_{K,i})\in \bcX$
the tuple sent with the $i$-th use of the $K$-MAC.
The $K$-MAC is described by
a conditional probability distribution $W_{Y|\bX} : \bcX \to \cP(\cY)$,
where $\cP(\cY)$ is the set of all finite probability distributions over $\cY$.
Given an input $\bbx^n \in \bcX^n$,
the discrete memoryless (DM) $K$-MAC produces an output
$y^n \in \cY^n$
according to the product distribution
\(
  W_{Y|\bX}^n(y^n|\bbx^n) = \prod_{i=1}^n W_{Y|\bX}(y_i|\bbx_i).
\)
A channel $W : \cX \to \cP(\cY)$ is called \defname{injective} if
it maps different inputs to different output distributions, i.e.\ if $W(\cdot|x) \ne W(\cdot|x')$ for all distinct $x, x' \in \cX$.

\subsection{Deterministic identification codes}
\label{sec:preliminiaries.DIcode}

Let $\bM \in \bbN^K$, where $M_k$ is the
number messages of the $k$-th sender, be a code size tuple.
An \defname{$(\bM, n)$ deterministic-identification (DI) code} for a $K$-MAC
$W_{Y|\mathbf{X}} : \bcX \to \cP(\cY)$
is a pair $\code = (\bbf, g)$, where 
$\bbf$ is a tuple of encoding functions
$f_k : [M_k] \to \cX_k^n$ for each $k \in [K]$,
and $g : [M_1] \times \dots \times [M_K] \times \cY^n \to \set{0,1}^K$
is an identification function.
Each sender $k \in [K]$ selects one message $m_k \in [M_k]$,
and for the message tuple $\bbm = (m_1,\dots, m_K)$,
the codeword tuple $\bbx^n(\bbm) = \bbf(\bbm) = \tup{f_1(m_1),\dots,f_K(m_K)}$
is sent over the channel.
At the channel output, the receiver observes the noisy sequence
$Y^n \sim W_{Y|\bX}^n\tup{ \cdot \,\middle|\, f(\bbm) }$.
Suppose that the receiver is interested in identifying the tuple
$\bbm' = (m'_1,\dots ,m'_K)$,
for each $k \in [K]$ he performs a binary hypothesis test: He declares that $m'_k$
was sent if $g_k(\bbm', Y^N) = 1$,
where $g_k(\bbm', Y^N)$ denotes the $k$-th component of $g(\bbm', Y^N)$;
otherwise, he declares that $m'_k$ was not sent.
Thus, the receiver decides correctly regarding sender $k$ if
$
  g_k(\bbm', Y^n) = \ind{ m_k' = m_k }.
$
Conversely, the error probability is given by
%
\begin{gather}
  e_k(\bbm'|\bbm)
    = 
    W^n_{Y|\bX} \tup[\big]{ g_k(\bbm', Y^n) = \ind{ m'_k \ne m_k } \,\big|\, \bbf(\bbm) }.
\end{gather}
Sending often incurs a cost, e.g. for transmission power.
This can be modeled by a non-negative cost function $\phi_k : \cX_k \to \bbR_+$ and a non-negative cost constraint $\Phi_k \in \bbR_+$
for each sender $k \in [K]$, where $\bbR_+$ are the non-negative real numbers.
An \defname{$(\bM, n, \bm\phi, \bm\Phi, \lambda)$ DI code} for a $K$-MAC $W_{Y|\bX}$
is defined as an $(\bM, n)$ DI code with cost functions $\bm\phi = (\phi_1,\dots,\phi_K)$ and cost constraints
$\bm\Phi = (\Phi_1,\dots,\Phi_K) \in \bbR_+^K$,
such that 
\begin{itemize}
    \item
    the tuple $\bbx^n = \bbf(\bbm)$ satisfies the average input constraint
      $
      \costIA(\bbx_i)
        \preceq \Phi_k,
      $
    where 
    $\bm\phi(\bbx_i) \coloneqq \tup{ \phi_1(x_{1,i}),\dots,\phi_K(x_{K,i}) }$,
    and “$\preceq$” holds iff “$\le$” holds for all
    components;
    \item
    the total error probability
    $e(\bbm'|\bbm) = W^n_{Y\bX}(\forall\,k: g_k(\bbm', Y^n) = \ind{m_k' \ne m_k} | \bbf(\bbm))$
    is bounded by 
    $
      e(\bbm'|\bbm)
      \le \lambda,
      $
    for all $\bbm,\bbm' \in [M_1] \times \dots \times [M_K]$.
\end{itemize}
The DI rate of user $k$ is defined as $R_k = \frac{1}{n} \log M_k$.
A rate tuple $\bR \in \bbR_+^K$ is 
\defname{achievable} over $W_{Y|\bX}$
under an average input constraint
if, for every $\lambda > 0$ and all sufficiently large $n$,
there exists an $(\bM, n, \bm\phi, \bm\Phi, \lambda)$ DI code for $W_{Y|\bX}$.
The \defname{DI capacity region} $\capDI(W_{Y|\bX}, \bm\phi, \bm\Phi)$
is the closure of the set of all achievable rate pairs.

\subsection{Laws of large Numbers}

We use basic concepts of typicality and the method of types,
as defined by Kramer~\cite[Chapter 1]{kramer2008multi_user_book}
and Csiszár--Körner~\cite[Chapter 2]{csiszarKoerner2011IT}.
%
The $n$-type $\hat{P}_{x^n}$ of a sequence \(x^n \in \cX^n\)
is defined by $\hat{P}_{x^n}(a) = \frac{1}{n} \sum_{i=1}^n \ind{x_i=a}$.
The set of all $n$-types over a set $\cX$ is denoted by
  {$\cP_n(\cX) \subset \cP(\cX)$}.
Furthermore, given a PMF \(p_X \in \cP(\cX)\), the \(\epsilon\)-typical set
is defined as
\begin{gather}
\label{eq:def.typSet}
  \cT_\epsilon^n(p_X) = \set[\big]{x^n :
    \abs[\big]{\hat{P}_{x^n}(a) - p_X(a)} \leq \epsilon \, p_X(a), \,\forall\,a }
    \,.
\end{gather}
The jointly typical set
$\cT_\epsilon^n(p_{XY})$ is defined similarly, where $\cX$ is replaced by
$\cX \times \cY$ and $p_X$ by $p_{XY}$.
Given a sequence \(x^n \in \cX^n\), the \emph{conditionally} \(\epsilon\)-typical set
is defined as
\(\cT_\epsilon^n(p_{XY} | x^n) = \set{y^n \in \cY^n : (x^n, y^n) \in \cT_\epsilon^n(p_{XY})}\).

\subsection{Previous results}


DI over point-to-point channels has been studied
for binary alphabets~\cite{jaja1985identification}
and for general, finite alphabets~\cite{ahlswedeCai1999detID,salariseddighPeregBocheDeppe2022det_ID_powerConstraints_tit}.
Already in~\cite{ahlswedeDueck1989id1}, it was observed that the DI capacity
of a discrete memoryless channel (a $K$-MAC with $K=1$)
equals the logarithm of the number of distinct rows
of the channel transition matrix.
In~\cite{ahlswedeCai1999detID}, an achievable bound for arbitrarily-varying
channels was derived, i.e. for 2-MACs where only one partial channel is used
for identification, while the other channel input is controlled by a
jammer.
%
%
In~\cite{salariseddighPeregBocheDeppe2022det_ID_powerConstraints_tit}, the following capacity theorem for discrete memoryless channels was derived.

\begin{theorem}
    [{see~\cite[Theorem~4]{salariseddighPeregBocheDeppe2022det_ID_powerConstraints_tit}}]
  The DI capacity of a DMC $W$ under the input constraint $(\phi, \Phi)$
  is given by
  \begin{gather}
    \capDI(W, \phi, \Phi) = \capDI(W^\star, \phi, \Phi)
    \,,
  \end{gather}
  where $W^\star$ is the maximizing injective channel that
  can be obtained from $W$ by restricting
  the channel input alphabet, i.e. removing $x'$ from $\cX$,
  if $W(\cdot|x) = W(\cdot|x')$, such that for all
  $x$ in the reduced alphabet $\cX^\star$, $W^\star(\cdot|x) = W(\cdot|x)$.
\end{theorem}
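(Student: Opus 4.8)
The statement is purely structural, so the plan is to avoid the explicit DI capacity formula entirely and to prove the two inclusions $\capDI(W^\star,\phi,\Phi) \le \capDI(W,\phi,\Phi)$ and $\capDI(W,\phi,\Phi) \le \capDI(W^\star,\phi,\Phi)$ directly at the level of codes. First I would make the reduction precise: call two input letters equivalent, $x \sim x'$, whenever $W(\cdot|x) = W(\cdot|x')$, and observe that an injective restriction amounts to keeping exactly one representative per equivalence class. I would single out the restriction $\tilde W$ that keeps, in each class, a representative of \emph{minimal} cost $\phi$; this is the role played by "maximizing", as the argument below shows that $\tilde W$ already attains the largest capacity among all restrictions.

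The inclusion $\capDI(W',\phi,\Phi) \le \capDI(W,\phi,\Phi)$ holds for \emph{every} injective restriction $W'$ of $W$, in particular for $W^\star$ and for $\tilde W$. Indeed, if $\cX' \subseteq \cX$ is the restricted alphabet and $W'$ agrees with $W$ on $\cX'$, then any $(M,n,\phi,\Phi,\lambda)$ DI code $(f,g)$ for $W'$ has codewords in $(\cX')^n \subseteq \cX^n$, identical output statistics $W^n(\cdot|f(m)) = (W')^n(\cdot|f(m))$, and hence identical error probabilities and identical cost; it is therefore verbatim a DI code for $W$. Thus every rate achievable over $W'$ is achievable over $W$.

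For the reverse direction I would introduce the letterwise collapse map $\rho : \cX \to \tilde\cX$ sending each letter to the minimal-cost representative of its class, extended to sequences by $\rho^n(x^n) = (\rho(x_1),\dots,\rho(x_n))$. Given any $(M,n,\phi,\Phi,\lambda)$ DI code $(f,g)$ for $W$, I would replace the encoder by $\rho^n \circ f$ while keeping the \emph{same} identifier $g$. Three facts make this a valid DI code for $\tilde W$ with unchanged parameters $(M,n,\phi,\Phi,\lambda)$: (i) since $W(\cdot|\rho(x)) = W(\cdot|x)$ for every $x$, the product distributions satisfy $W^n(\cdot|\rho^n(f(m))) = W^n(\cdot|f(m))$, so every error probability $e(m'|m)$ is left unchanged and stays below $\lambda$; (ii) since $\rho$ maps to minimal-cost representatives, $\phi(\rho(x)) \le \phi(x)$ for all $x$, so the average cost constraint is preserved; (iii) the new codewords lie in $\tilde\cX^n$, where $W$ coincides with $\tilde W$. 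Hence $\capDI(W,\phi,\Phi) \le \capDI(\tilde W,\phi,\Phi)$. Combining this with the previous paragraph and with the maximality of $W^\star$ yields the chain
\begin{gather*}
  \capDI(W,\phi,\Phi)
  \le \capDI(\tilde W,\phi,\Phi)
  \le \capDI(W^\star,\phi,\Phi)
  \le \capDI(W,\phi,\Phi),
\end{gather*}
so all four quantities are equal, which is the claim.

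The main obstacle is entirely in the cost bookkeeping. Without a constraint the collapse $\rho$ is trivially admissible, but equivalent letters $x \sim \rho(x)$ may have different costs, so a careless choice of representative could push a previously feasible codeword over the budget $\Phi$; choosing the minimal-cost representative is exactly what guarantees $\phi(\rho(x)) \le \phi(x)$ and hence feasibility after collapsing, and is why the capacity-maximizing restriction is the relevant one. A secondary point worth checking is that $\rho^n$ may identify two distinct codewords $f(m_1) \ne f(m_2)$ into a single sequence; this is harmless, because such codewords already induced identical output distributions and the same $g$ is reused, so the error analysis is unaffected — in fact, for $\lambda < 1/2$ a valid maximal-error code cannot contain two messages with identical output statistics, so no genuine collision occurs.
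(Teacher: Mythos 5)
Your proof is correct, but there is nothing in this paper to compare it against: the statement is quoted as a previous result from \cite[Theorem~4]{salariseddighPeregBocheDeppe2022det_ID_powerConstraints_tit} and is not proved here, so the relevant comparison is with that reference (and with the converse of \cref{thm:capacityDM_MAC}, which is the closest in-paper relative). Relative to the reference, your route is genuinely different: there the theorem is obtained by computing both sides explicitly --- a typicality-based achievability construction and a codeword-counting converse showing both capacities equal the maximum of $H(X)$ subject to $\expect[\phi(X)] \le \Phi$ over the reduced alphabet --- whereas your argument is a non-asymptotic, code-for-code reduction: the trivial embedding gives $\capDI(W',\phi,\Phi) \le \capDI(W,\phi,\Phi)$ for every injective restriction $W'$, and the letterwise collapse onto minimal-cost representatives gives the reverse inequality for $\tilde W$, since it preserves every output distribution exactly (hence every error probability, reusing the same identifier $g$) and can only decrease the average cost. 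This is more elementary and has two side benefits: it makes the word ``maximizing'' constructive (the minimal-cost-representative restriction is a maximizer), and it transfers codes at fixed blocklength and error threshold rather than only in the capacity limit. What it does not deliver is the value of the capacity, which the reference's proof yields and which is what gets used downstream --- e.g.\ the converse of \cref{thm:capacityDM_MAC} bounds $R_k \le H(X_k)$ by exactly the type-counting argument your reduction avoids. Your identification of the cost bookkeeping as the crux is right: equivalent letters may have different costs, so an arbitrary choice of representatives could break feasibility, and minimal-cost representatives fix this. Your collision remark is also sound (it parallels the paper's own observation, in the converse of \cref{thm:capacityDM_MAC}, that two messages whose codewords induce identical output statistics force an error of at least $\frac{1}{2}$), though it is not even needed, since the collapsed encoder's error probabilities coincide with the original ones whether or not codewords merge.
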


\section{Results}

Our main results are for a maximal error criterion,
as used in the definition of a DI code
in the previous section,
whereas \cref{sec:results.avg} modifies the problem
to an average error criterion.

\subsection{Maximal error criterion}

An achievable bound for the $K$-MAC can be obtained by
interpreting every partial point-to-point channel
as an \defname[AVC]{arbitrarily varying channel}~\cite{lapidothNarayan1998uncertainty},
where the channel varies for every input symbol
depending on a channel state.
In our case, the state comprises all channel inputs
by the other users.
An AVC is usually a worst case model,
where a malicious jammer can select the state
to inhibit communication.
It can be applied to the MAC, where every user tries
\emph{not} to interfere with the others,
by constraining the channel states to the codebook.
The following proposition provides an achievable
DI rate for an AVCs under input and state type constraints.

\begin{proposition}
\label{prop:achiev}
  Let $W : \cX \times \cS \to \cP(\cY)$ bet a $2$-MAC, $P \in \cP_n(\cX)$ and $\cP_S \subseteq \cP_n(\cS)$, and define the set
  \begin{gather}
    \cQ(P, \cP_S , W) = \set[\big]{
      p_{XX'SY} \in \cP(\cX \times \cX \times \cS \times \cY) : \nonumber
      \\p_S, p_{S'} \in \cP_S,
      \ p_{Y|X} = p_S W,
      \ p_{Y|X'} = p_{S'} W, \nonumber
      \\p_X = p_{X'} = P,
      \ \text{$X - X' - Y$ is a Markov chain}
    }\,.
  \end{gather}
  Then, for every rate
  \begin{gather}
    R \ge \min_{p_{XX'SY} \in \cQ(P, \cP_S, W)} I(X'; XSY)
  \end{gather}
  there exists a pair $(f, g)$ and a constant $\tau > 0$ such that 
  \begin{subthms}
  \subthm
  $f(m) \in \cT_0^n(P)$, for every message $m \in \intv{2^{nR}}$;
  \subthm 
  for every $p_S \in \cP_S$ and every $s^n \in \cT_0^n(p_S)$,
  $(f, g)$ is a $(2^{nR}, n, e^{-n\tau})$ DI-code for the channel $W^n(\cdot | \cdot, s^n)$.
  \end{subthms}
\end{proposition}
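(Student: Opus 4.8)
The plan is to prove achievability by random coding over the type class $\cT_0^n(P)$ together with a robust typicality decoder, and then to upgrade the resulting average-error bound to the required maximal-error guarantee that holds \emph{uniformly} over every admissible state sequence. First I would draw the $2^{nR}$ codewords $f(1),\dots,f(2^{nR})$ independently and uniformly from $\cT_0^n(P)$, which immediately secures property~(a). For the decoder I would set $g(m',y^n)=1$ exactly when $y^n$ lies in the conditionally typical output set of $f(m')$ under \emph{some} admissible averaged channel, i.e.\ when $y^n \in \bigcup_{p_{S'}\in\cP_S}\cT_\epsilon^n(p_{S'}W \mid f(m'))$. Unioning over all state types in $\cP_S$ is precisely what lets a single decoder work against every $s^n$ at once.

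The missed-identification (Type~I) error is the easy half. When $f(m)$ is sent over $W^n(\cdot\mid\cdot,s^n)$ with $s^n\in\cT_0^n(p_S)$ and $p_S\in\cP_S$, the output $Y^n$ is, with probability $1-e^{-n\tau_1}$, conditionally typical given $f(m)$ and hence jointly typical with $f(m)$ under the averaged channel $p_SW$; thus $Y^n$ falls into the $p_S$-branch of $\cD_m$ and $g(m,Y^n)=1$. This is a direct application of the conditional-typicality (law-of-large-numbers) estimates, and it holds for every admissible $s^n$ because the decision region already unions over all of $\cP_S$.

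The false-identification (Type~II) error is the crux. For $m'\neq m$ I would bound $\Pr[Y^n\in\cD_{m'}]$ by decomposing according to the joint type of the quadruple $(f(m),f(m'),s^n,Y^n)$. Acceptance of $m'$ forces $(f(m'),Y^n)$ to be jointly typical under some $p_{S'}W$, while the channel forces $(f(m),s^n,Y^n)$ to be typical under $W$; combined with $p_X=p_{X'}=P$ and the requirement that $Y$ be explained by $X'$ independently of $X$—the Markov relation $X-X'-Y$—these are exactly the defining constraints of $\cQ(P,\cP_S,W)$. Since $f(m')$ is drawn independently of the fixed triple $(f(m),s^n,Y^n)$, the probability that it realizes a prescribed confusing conditional type given that triple is at most $2^{-n(I(X';XSY)-\delta)}$ by the standard type-counting estimate (the number of type-$P$ sequences with that conditional type is $\approx 2^{nH(X'|XSY)}$, against $2^{nH(P)}$ total). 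Minimizing the exponent over $\cQ$ identifies $\min_{\cQ}I(X';XSY)$ as the governing per-hypothesis confusion exponent, which is then balanced against the number of competing hypotheses to obtain the stated rate condition.

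Finally I would convert the average bound into a maximal one. Because the decision regions do not depend on $s^n$ and there are only polynomially many types in $\cP_n(\cS)$, a union bound over $\cP_S$ together with a standard expurgation over the $2^{2nR}$ ordered message pairs turns the exponentially small expected error into a maximal error $\le e^{-n\tau}$ holding simultaneously for all pairs and all admissible $s^n$, for a suitable $\tau>0$, which yields property~(b). I expect the main obstacle to be exactly this uniformity over \emph{all} admissible state sequences rather than over a single random one: since an adversarial $s^n$ may be correlated with the codebook, the confusion analysis must control the joint type of $f(m')$ with the full context $(f(m),s^n,Y^n)$ and not with $Y^n$ alone, and it is this that promotes the relevant exponent from $I(X';Y)$ to the stronger $I(X';XSY)$ appearing in $\cQ$.
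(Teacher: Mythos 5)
There is a genuine gap, and it is fatal rather than cosmetic: your decoder has no exclusion mechanism, and a pure typicality test cannot survive a state sequence that is chosen \emph{after} the (public, deterministic) codebook is fixed. Concretely, take the noiseless modulo-2 adder $Y = X \oplus S$, $P$ uniform, and $\cP_S = \set{p_S}$ with $p_S(1) = 0.3$ --- exactly the kind of singleton-state-type, injective-average-channel situation in which \cref{thm:capacityDM_MAC} invokes \cref{prop:achiev}, so the claimed achievable rate is $H(P)=1$. For a typical pair of codewords, $x^n_m \oplus x^n_{m'}$ has weight close to $n/2$, and one can then pick $s^n$ of exact type $p_S$ such that $Y^n = x^n_m \oplus s^n$ has joint type (within $\epsilon$) equal to $P \times p_S W$ with $x^n_{m'}$; your decoder then outputs $g(m',Y^n)=1$ with probability one although $m$ was sent. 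This attack applies to essentially every pair of codewords of any positive-rate codebook, so expurgation over message pairs cannot remove it, and it is invisible to your union bound over the polynomially many types in $\cP_S$: for a fixed codebook the error depends on the \emph{joint} type of $(x^n_m, x^n_{m'}, s^n)$, and the adversary can realize any joint type consistent with the pair once the codebook is known. Your closing remark correctly names this uniformity-over-states difficulty, but the mechanism you propose does not resolve it.

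The paper's proof differs in exactly the two places your proposal breaks. First, the identification rule is $g(m',Y^n) = \ind{Y^n \in \cD_{m'} \setminus \cE_{m'}}$, where the exclusion region $\cE_{m'}$ rejects any output whose joint type with $x^n_{m'}$, some other codeword, and some state sequence lies in $\cQ_\delta$; in the attack above the quadruple $(x^n_{m'}, x^n_m, s^n, Y^n)$ has precisely such a type, so the paper's decoder correctly rejects. Second, instead of i.i.d.\ random coding plus expurgation, the paper uses the deterministic ``balanced codebook'' lemma of Csiszár--Körner/Csiszár--Narayan, which guarantees, uniformly over \emph{all} $s^n \in \cS^n$ and all joint types, that at most roughly $2^{n(R - I(XS;X'))}$ codewords complete any prescribed joint type with any given codeword and state sequence. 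That counting bound is what keeps the missed-identification probability $W^n(\cE_m \mid x^n_m, s^n)$ exponentially small, and it is there --- not in the false-identification analysis --- that the rate condition (read as $R < \min_{\cQ} I(X'; XSY)$ in the proof) is used; false identification is then controlled by the slack $\delta$ alone. Relatedly, the exponent $I(X';XSY)$ you claim per pair is not what your decoder can deliver: since acceptance depends only on $(f(m'),Y^n)$, plain random coding yields only an $I(X';Y)$-type exponent, and the promotion to $I(X';XSY)$ happens precisely through the exclusion set together with the counting lemma.
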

The proof follows in Section~\ref{proof:prop:achiev}.

\begin{remark}
  \label{remark:avc}
For injective DMCs, noise noes not reduce the
DI capacity
(see~\cite{ahlswedeDueck1989id1,salariseddighPeregBocheDeppe2022det_ID_powerConstraints_tit}), and
by \cref{prop:achiev} the same is true for
a 2-MAC, where only one message is identified, if
$p_S W(\cdot|x) \ne p_{S'} W(\cdot|x')$ for all $p_S,p_{S'} \in \cP_S$ and all distinct $x, x' \in \cX$,
since in this case, $H(X'|X) = 0$, and $I(X'; XSY) = H(X)$.
Hence, increasing the noise does not degrade performance.
By the following capacity theorem, the rate given by
\cref{prop:achiev} gives indeed the optimal rates
for the $K$-MAC.
\end{remark}

Define $S_k = (X_1,\dots,X_{k-1},X_{k+1},\dots,X_K)$ and the sets
\begin{align}
  \cR^l(\bX, Y)
  &=
  \set{
    \begin{array}{l}
      \bR \in \bbR^K :
      \forall\,k \in [K],\,
      \\
      0 \le R_k \le
      \hspace{-1em}
      \min\limits_{\substack{p_{X_k X'_k S_k Y} \\ \in \cQ(p_{X_k}, \set{ p_{S_k}} , W)}}
      \hspace{-1em}
      I(X'_k; \bX Y)
    \end{array}
  }
  ,
  \\
  \cR^u(\bX)
  &=
  \set{ \bR \in \bbR^K : 0 \le R_k \le H(X_k),\, \forall\,k \in [K] }
  \,.
\end{align}

\begin{theorem}
\label{thm:capacityDM_MAC}
  The DI-capacity region of a $K$-MAC
  $W_{Y|\bX} : \bcX \to \cP(\cY)$ under an input cost constraint
  $\costIA(\bbx_i) \preceq \bm\Phi$
  satisfies
  \begin{gather}
    \bigcup_{\substack{
        p_{\bX} \in \cP(\bcX) : \\
        \expect[\bm\phi(\bX)] \preceq \bm\Phi
    }}
    \hspace{-1.1em}
    \cR^l(\bX, Y)
    \subseteq
    \capDI(W_{Y|\bX}, \bm\phi, \bm\Phi)
    \subseteq
    \hspace{-1.1em}
    \bigcup_{\substack{
        p_{\bX} \in \cP(\bcX) : \\
        \expect[\bm\phi(\bX)] \preceq \bm\Phi
    }}
    \hspace{-1.1em}
    \cR^u(\bX)
    \,.
  \end{gather}
  The bounds coincide if
  every average partial channel $W_{Y|X_k} = p_{S_k}W_{Y|\bX}$
  is injective, under the input constraint.
\end{theorem}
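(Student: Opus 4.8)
The plan is to prove the two inclusions separately and then read off the coincidence from \cref{remark:avc}. For the lower bound I would fix a distribution $p_{\bX}$ with $\expect[\bm\phi(\bX)] \preceq \bm\Phi$ and a rate tuple $\bR \in \cR^l(\bX, Y)$, and treat the problem one user at a time. For user $k$, regroup the channel as a $2$-MAC $W : \cX_k \times \cS_k \to \cP(\cY)$ whose state $S_k = (X_1,\dots,X_{k-1},X_{k+1},\dots,X_K)$ collects the inputs of all other users, and apply \cref{prop:achiev} with input type $p_{X_k}$ and state-type set $\{p_{S_k}\}$, where $p_{S_k}$ is the marginal of $p_{\bX}$ on the remaining coordinates. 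This yields a pair $(f_k, g_k)$ of rate $R_k \le \min_{\cQ(p_{X_k}, \{p_{S_k}\}, W)} I(X'_k; \bX Y)$ whose codewords all lie in $\cT_0^n(p_{X_k})$ and which identifies correctly against \emph{every} state sequence of type $p_{S_k}$ with error at most $e^{-n\tau}$. I would then assemble the MAC code as $\bbf = (f_1,\dots,f_K)$, $g = (g_1,\dots,g_K)$: since each codeword of user $k$ has type $p_{X_k}$, the average cost equals $\expect[\phi_k(X_k)] \le \Phi_k$, and a union bound over the $K$ per-user error events keeps the total error below $K e^{-n\tau} < \lambda$ for large $n$.

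The step needing the most care is verifying the state-type hypothesis of \cref{prop:achiev}: the ``state'' actually seen by user $k$ is the tuple $(f_j(m_j))_{j\ne k}$ of the transmitted codewords, and \cref{prop:achiev} only protects against states whose joint type is $p_{S_k}$. For $K=2$ this is immediate, since $S_1 = X_2$ and every codeword of user~$2$ already has type $p_{X_2} = p_{S_1}$; the same holds whenever $p_{\bX}$ is a product distribution, since independent constant-composition codebooks make the joint type of any codeword tuple concentrate on $\prod_{j\ne k} p_{X_j} = p_{S_k}$. For a general correlated $p_{\bX}$ with $K \ge 3$ one must instead construct the $K$ codebooks jointly, so that \emph{every} message tuple produces a $p_{S_k}$-typical state for each $k$ simultaneously; arranging this is the main obstacle. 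I note, however, that in the injective case that is the focus of the theorem the minimum defining $\cR^l$ equals $H(X_k)$ regardless of how $p_{\bX}$ correlates its coordinates, so product distributions already exhaust the region and this difficulty is avoided.

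For the upper bound I would argue by counting codewords. Because the maximal error is below $\lambda < \tfrac{1}{2}$, no two messages of a single user may share a codeword: if $f_k(m_k)=f_k(m'_k)$ with $m_k\ne m'_k$, then fixing the other users' messages makes the channel outputs under $(m_k,\bbm_{-k})$ and $(m'_k,\bbm_{-k})$ identically distributed, while the correct answer of the test $g_k$ for the hypothesis $m'_k$ differs between the two; the two maximal-error constraints then force the common output law to satisfy both $\Pr[g_k=1]\le\lambda$ and $\Pr[g_k=0]\le\lambda$, i.e.\ $1\le 2\lambda$, a contradiction. Hence $M_k$ equals the number of distinct codewords of user $k$, each a sequence of cost at most $\Phi_k$; grouping these by type and using $\card{\cT^n(P)}\le 2^{nH(P)}$ together with the polynomial bound on the number of types gives $R_k = \tfrac{1}{n}\log M_k \le \max_{p_{X_k}:\,\expect[\phi_k(X_k)]\le\Phi_k} H(X_k) + o(1)$. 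Since the cost constraint decomposes across users, I may pick a maximiser $p_{X_k}^\star$ for each $k$ and set $p_{\bX} = \prod_k p_{X_k}^\star$, which satisfies $\expect[\bm\phi(\bX)]\preceq\bm\Phi$ and places $\bR$ in $\cR^u(\bX)$, establishing the claimed inclusion.

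Finally, the two bounds coincide under the injectivity hypothesis: when each average partial channel $W_{Y|X_k}=p_{S_k}W_{Y|\bX}$ is injective, \cref{remark:avc} gives $H(X'_k\mid X_k)=0$ and hence $\min_{\cQ(p_{X_k},\{p_{S_k}\},W)} I(X'_k;\bX Y) = H(X_k)$, so the upper edge $H(X_k)$ of $\cR^u$ is matched by $\cR^l$. The capacity region is then the hyperrectangle $\{\bR : 0\le R_k \le \max_{p_{X_k}} H(X_k),\ \forall k\}$.
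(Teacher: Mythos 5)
Your proposal reproduces the paper's own proof essentially step for step: achievability by applying \cref{prop:achiev} to each partial channel with state $S_k$ and singleton state-type set $\set{p_{S_k}}$, then combining the per-user codes with a union bound; the converse by showing that a repeated codeword forces an error probability of at least $\frac{1}{2}$ and then counting constant-composition codewords via the method of types; and the coincidence of the bounds from $H(X'_k|X_k)=0$ as in \cref{remark:avc}. The one point where you go beyond the paper is the verification that the ``state'' seen by user $k$, namely the tuple $(f_j(m_j))_{j\ne k}$, actually has joint type $p_{S_k}$: the paper passes over this silently, and you are right that it is automatic only for $K=2$, so your remark exposes a gap in the paper's argument rather than creating one. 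Be aware, though, that your suggested remedy for product $p_{\bX}$ is not by itself sufficient under the maximal-error criterion: constant composition of each codebook separately does not pin down the joint type of a codeword tuple (two codebooks over $\set{0,1}$, each of type $(\frac{1}{2},\frac{1}{2})$, can produce pairs whose joint type is far from the product type), and a random-codebook concentration argument must survive a union bound over exponentially many message tuples; a cleaner route is to invoke \cref{prop:achiev} with $\cP_S$ enlarged to all joint types having marginals $p_{X_j}$, $j\ne k$, which in the injective case still gives the rate $H(X_k)$ provided the channels averaged over that larger set remain injective.
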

\begin{remark}
For continuous $\bm\phi$, the bounds coincide even if all but
finitely many average partial channels are injective,
since $\cR^l$ and $\cR^u$ are continuous and
the capacity region is the closure of all achievable rate regions.
\end{remark}
\begin{proof}
Note that for codewords
$x^n \in \cT_0^n(p_{X_k})$ and $X_k \sim p_{X_k}$,
\begin{gather}
  \expect[\phi_k(X_k)] = \frac{1}{n} \sum_{n=1}^N \phi_k(x_i).
  \label{eq:typicalAverage}
\end{gather}
Thus, achievability of the lower bound follows by applying
\cref{prop:achiev} individually to every partial channel,
by assigning $X = X_k$ such that $\expect[\phi_k(X_k)] \le \Phi_k$,
and $\cP_S = \set{ p_S }$, where
$S = S_k = (X_1,\dots,X_{k-1},X_{k+1},\dots,X_K)$,
for every $k \in [K]$.
Thus, for some $\tau > 0$,
every $k \in [K]$,
and arbitrary $s_k^n \in p_{S_k}$,
we get a $(2^{n R_k}, n, \phi_k, \Phi_k, e^{-n\tau})$
DI code $(f_k, g_k)$
for the channel $W_{Y|X_k S_k}^n(\cdot|\cdot, s_k^n)$,
where $R_k \ge \min_{p_{X_k X'_k S_k Y} \in \cQ(p_{X_k}, \set{ p_{S_k} }, W)}
I(X'_k; X_k S_k Y)$.
By combining the point-to-point codes, we obtain the
$(\bM, n)$ DI code $(\bbf, g)$, where
$\bbf = [f_1,\dots,f_K]$, $g(\bbm, Y^n) = [g_1(m_1),\dots,g_K(m_K)]$,
$\bM = [2^{n R_1}, \dots, 2^{n R_K}]$,
and the error probabilities
are bounded by $e_k(\bbm'|\bbm) \le 2^{-n\tau}$.
The types satisfy the input constraint, and
by the union bound, the total error probability
satisfies
$
  e(\bbm'|\bbm) \le \sum_{k=1}^K 2^{-n\tau} = K 2^{-n\tau},
$
which tends to zero as $n \to \infty$.
Therefore, for sufficiently large $n$, $(\bbf, g)$
is an $(\bM, n, \bm\phi, \bm\Phi, e^{-n \tau})$ DI code for the
$K$-MAC $W_{Y|\bX}$, and in the limit $n \to \infty$,
it achieves the maximal rates in $\cR(p_{\bX})$.

The proof of the upper bound follows the same line of argument
as the point-to-point converse
in~\cite[Section III.D]{salariseddighPeregBocheDeppe2022det_ID_powerConstraints_tit}:
Without loss of generality we consider Sender 1.
Suppose the same codeword is assigned to
two messages $m_1, m'_1 \in \cM_1$, $m_1 \neq m_1'$,
i.e. $f_1(m_1) = f_1(m_1')$,
and let $(m'_2,\dots,m'_K) = (m_2,\dots,m_K)$.
Then, given that $e_k(\bbm'|\bbm') \le \frac{1}{2}$,
\begin{align}
  e_k(\bbm'|\bbm)
    & = W_{Y|\bX}^n \tup{ g_k(\bbm', Y^n) = 1 | \bbf(\bbm) }  \\& = W_{Y|\bX}^n \tup{ g_k(\bbm', Y^n) = 1 | \bbf(\bbm') }
  \\& = 1 - W_{Y|\bX}^n \tup{ g_k(\bbm', Y^n) = 0 | \bbf(\bbm') }
  \\& = 1 - e_k(\bbm'|\bbm')
  \\& \ge \frac{1}{2}
  \,,
\end{align}
and hence, no rate is achievable.
Consequently, the number of messages is at most the number of codewords,
for each sender $k \in [K]$.
By~\eqref{eq:typicalAverage},
the type $p_{X_k}$ of every codeword satisfying
the input constraint also
satisfies $\expect[\phi_k(X_k)] \leq \Phi_k$.
Hence,
\begin{align}
  R_k
    & \le \max_{p_{X_k} \::\: \expect[\phi_k(X_k)] \le \Phi_k}
    \frac{1}{n} \log_2\tup{ \card{\cP_n(\cX_n)} \cdot \abs{\cT_0^n(p_{X_k})} }
  \\& \le \max_{p_{X_k} \::\: \expect[\phi_k(X_k)] \le \Phi_k} H(X_k) + \frac{1}{n} \log_2 \card{\cP_n(\cX_k)}
  \\& \le \max_{p_{X_k} \::\: \expect[\phi_k(X_k)] \le \Phi_k} H(X_k) + \frac{1}{n} \card{\cX_k} \log_2 (n+1)
  \,,
\end{align}
which converges to $H(X_k)$ as $n \to \infty$.
\cref{thm:capacityDM_MAC} follows,
becuase for injective average partial channels,
$H(X'_k|X_k) = 0$ and thus $I(X'_k; X_k S_k  Y) = H(X_k)$ (see \cref{remark:avc}).
\end{proof}

\subsection{Average error criterion}
\label{sec:results.avg} 

In \cref{sec:preliminiaries.DIcode} we defined a maximum-error criterion.
If all messages are equiprobable, we might consider
the average error instead. A simple
expurgation (see also~\cite[Proposition~2]{hanVerdu1992idNewResults})
shows that given $\bbm' = \bbm$, this
does not change the capacity.
But given $m'_k \ne m_k$,
the average error probability is given by
\fxwarning*[author=JR]{Can be simpler?}{
\begin{gather}
  \bar e_k(\bbm')
  =
  \frac{1}{M_k - 1} \prod_{\substack{k' = 1 \\ k' \ne k}}^K
  \frac{1}{M_{k'}}
  \sum_{k'' = 1}^K
  \sum_{\substack{m_{k''} = 1 \\ m_{k''} \ne m_k}}^{M_{k''}}
  e_k(\bbm'|\bbm).
  \label{eq:err.af.dmc}
\end{gather}
}
We define an $(\bM, n, \bm\phi, \bm\Phi, \lambda)$ DIA code
as an $(\bM, n)$ DI-code 
satisfying the input cost constraint $\costIA(\bbx_i) \preceq \bm\Phi$
and the error bounds
$e(\bbm'|\bbm) \le \lambda$,
for $\bbm' = \bbm$,
and
$\bar e(\bbm')
= \sum_{k : m'_k \ne m_k} e_k(\bbm') \le \lambda$,
for $\bbm' \ne \bbm$.
for the false-DI error.
Since code sizes can scale super-exponentially for
\defname[DIA]{average-error DI},
we call a sequence $(\bM_n)_{n\in \bbN}$
of code size tuples
\defname{achievable} if, for every $\lambda > 0$
and sufficiently large $n$, there exists
an $(\bM_n, n, \bm\phi, \bm\Phi, \lambda)$ DIA code
(see~\cite{salariseddighPeregBocheDeppe2022det_ID_powerConstraints_tit}
for a discussion on scalings).

\begin{theorem}
  \label{thm:err.af.mac.infinite}
  For every $K$-MAC $W_{Y|\bX}$,
  if $\capDI(W_{Y|\bX}, \bm\phi, \bm\Phi) > 0$,
  then arbitrary sequences $(\bM_n)_{n \in \bbN}$, are achievable under the constraint
  $\costIA(\bbx_i) \preceq \bm\Phi$.
\end{theorem}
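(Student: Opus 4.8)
The plan is to exploit the positivity of the maximal-error capacity to obtain an exponentially large reservoir of reliably identifiable codewords, and then to \emph{multiply assign} each sender's messages onto this reservoir; the averaging built into the average-error criterion will absorb the resulting collisions. Concretely, positivity of $\capDI(W_{Y|\bX}, \bm\phi, \bm\Phi)$ lets me fix a rate tuple $\bR$ with $R_k > 0$ for all $k$ inside the region, so that by \cref{thm:capacityDM_MAC} there exists, for all sufficiently large $n$, a maximal-error $(\bM^\star, n, \bm\phi, \bm\Phi, e^{-n\tau})$ DI code $(\bbf^\star, g^\star)$ with $M^\star_k = \lceil 2^{nR_k}\rceil$ codewords for sender $k$, a constant $\tau > 0$, componentwise identification functions $g^\star_k(m_k, \cdot)$, and all codewords drawn from a fixed type class obeying the cost constraint. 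I keep this as the base code.

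Given the target sequence $\bM_n$, I would pick, for each sender $k$, a balanced map $\sigma_k : [M_k] \to [M^\star_k]$ whose fibers have size at most $\lceil M_k / M^\star_k\rceil$, and set $f_k(m_k) = f^\star_k(\sigma_k(m_k))$ together with $g_k(m_k, \cdot) = g^\star_k(\sigma_k(m_k), \cdot)$. Since every $f_k(m_k)$ is one of the base codewords, the constraint $\costIA(\bbx_i) \preceq \bm\Phi$ is inherited automatically; and since each $g^\star_k$ depends only on its message index and the output, fixing the other senders' messages merely selects a state $s^n_k$ inside the type class for which \cref{prop:achiev} guarantees a uniform maximal error, so no cross-user coupling arises.

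It remains to bound the two error quantities. For $\bbm' = \bbm$ the test $g_k(m_k, \cdot) = g^\star_k(\sigma_k(m_k), \cdot)$ is the base code's correct-identification test, so a union bound gives $e(\bbm|\bbm) \le K e^{-n\tau}$. For $m'_k \ne m_k$, a false identification of sender $k$ is forced only when the true and queried messages collide on a base codeword, $\sigma_k(m_k) = \sigma_k(m'_k)$; every non-colliding competitor is rejected with error at most $e^{-n\tau}$. Averaging $e_k(\bbm'|\bbm)$ over the true message tuple as in \eqref{eq:err.af.dmc}, at most $\lceil M_k/M^\star_k\rceil - 1$ of the $M_k - 1$ competitors collide, whence
\begin{gather}
  \bar e_k(\bbm')
  \le
  \frac{\lceil M_k/M^\star_k\rceil - 1}{M_k - 1}
  + e^{-n\tau}
  \le
  \frac{2}{M^\star_k}
  + e^{-n\tau}
  \le
  2^{-nR_k + 1} + e^{-n\tau}
  \,,
\end{gather}
and summing over the at most $K$ active senders yields $\bar e(\bbm') \le 2K\,2^{-n\min_k R_k} + K e^{-n\tau}$, which drops below any $\lambda > 0$ once $n$ is large.

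The decisive point — and the only genuine obstacle — is that this bound depends on the target size $M_k$ only through the collision \emph{fraction} $1/M^\star_k$, and never through $M_k$ itself: the average weights each competing message by $1/(M_k-1)$, so only the relative number of codeword collisions survives, and the exponential growth of the reservoir $M^\star_k = \lceil 2^{nR_k}\rceil$ drives it to zero no matter how fast $\bM_n$ grows. The remaining care is to confirm that the other senders' choices never leave the designed type class, so that the per-sender guarantee of \cref{prop:achiev} applies uniformly; this holds because all codewords, and hence all induced states $s^n_k$, stay within the single type class of the base code.
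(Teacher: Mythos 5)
Your proof is correct, but it takes a genuinely different route from the paper. The paper proves the theorem in two steps: first it invokes the multiple-assignment argument of Han--Verd\'u (their Proposition~1) as a black box to settle the DMC ($K=1$) case, and then it reduces the $K$-MAC to $K$ separate DMCs by \emph{time division} -- each sender gets a slice of $n/K$ channel uses while the others emit a constant minimal-cost letter, turning the channel seen in that slice into a point-to-point DMC to which the cited DMC result applies. You instead re-derive the multiple-assignment argument directly at the MAC level: you take the maximal-error code of \cref{thm:capacityDM_MAC} as an exponentially large reservoir, fold each $[M_k]$ onto it with a balanced map, and show that under the averaging in \eqref{eq:err.af.dmc} only the collision \emph{fraction} $\approx 1/M^\star_k$ survives, which the exponential reservoir kills. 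What each approach buys: the paper's route is shorter on paper (it leans on the citation and never has to redo a collision analysis), at the price of the time-sharing bookkeeping (cost accounting for the idle letters, a factor-$K$ rate loss that is harmless here) and of needing each single-user partial channel to have positive capacity when the others sit at fixed letters; your route is self-contained, keeps all senders simultaneously active, and yields explicit error bounds ($Ke^{-n\tau}$ and $2K\,2^{-n\min_k R_k}+Ke^{-n\tau}$) with no detour through a point-to-point reduction. Both arguments rest on the same core observation -- average-error weighting makes collisions cheap. One shared caveat, not a flaw of yours specifically: both proofs implicitly read ``$\capDI > 0$'' as granting \emph{every} sender a positive rate (you need a rate tuple with all $R_k>0$ in the region; the paper needs each partial DMC in its time slice to have positive DI capacity), which is the natural reading but is stronger than the region merely containing a nonzero point.
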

\begin{proof}
  For the discrete memoryless channel (1-MAC),
  the theorem follows from the proof of
 ~\cite[Proposition~1]{hanVerdu1992idNewResults},
  without modification, by assuming
  deterministic encoding distributions and noting that
  the DIA code has the same codebook as the DI code,
  and hence the cost constraint is satisfied.
  The proof assigns arbitrarily large equivalence classes
  for the messages (e.g. by taking every message modulo $M$),
  and one codeword to each equivalence class.
  The error probabilities converge, since for every
  receiver's message $m'$, in is unlikely
  that $f(\tilde m) = f(m')$,
  on average over random $\tilde m$.

  The result for the $K$-MAC
  follows from time division:
  Consider an $(\bM, n, \bm\phi, \bm\Phi, \lambda)$
  DI code.
  For each $k$, there exists a letter $x_k \in \cX_k$
  that minimizes the cost $\phi_k(x_k)$.
  Assign a time slice of $n/K$ symbols to each sender.
  During the rest of the time, sender $k$ emits $x_k$, constantly.
  When it is $k^\star$'s turn to send,
  during his time slice, the channel is a DMC
  $W_{Y|X_{k^\star}} = W_{Y|\bX}(\cdot|x_1,\dots,x_{k^\star-1},\cdot,x_{k^\star+1},x_K)$.
  The maximal achievable rate is thus
  $R_k = \frac{1}{nK} \log\log M_k > 0$,
  and by the result for DMCs,
  every sequence of code sizes $M^A_{k,n}$ is achievable,
  i.e. for sufficiently large $n$,
  there exists an $(M^A_{k,n}, n, \phi_k, \Phi_k, \frac \lambda K)$
  DIA code $(f_k^A, g_k^A)$ for $W_{Y|X_{k^\star}}$.
  Let 
  $\bM^A_n = [M^A_{1,n},\dots,M^A_{K,n}]$,
  $\bbf^A = [f_1^A, \dots, f_K^A]$,
  and
  $g^A(\bbm', Y^n) = [g_1^A(m'_1, Y^n),\dots,g_K^A(m'_K, Y^n)]$.
  Then, $(\bbf^A, g^A)$ is an
  $(\bM^A_n, n, \bm\phi, \bm\Phi, \lambda)$ DIA code.
\end{proof}

\section{Examples}


\begin{figure}[t]
  \begin{minipage}{0.23\textwidth}
    \centering
    \rateRegionBinAdderModThreeNoiseless[scale=2.5]
    \caption{\label{fig1}
    DI capacity vs. Transmission capacity regions for the modulo-3 BA-MAC}
  \end{minipage}
  \hfill
  \begin{minipage}{0.23\textwidth}
    \centering
    \rateRegionBinAdderModTwo[scale=2.5]{0.05}
    \caption{\label{fig2}
    DI capacity vs. Transmission capacity regions for the modulo-2 BA-MAC and the noisy BM-MAC}
  \end{minipage}
\end{figure}

The following examples consider different types of binary symmetric 2-MAC
$W_{Y|X_1,X_2}$ with $\cX_1=\cX_2=\{0,1\}$.
The symmetry of the channel implies that $W_{Y|X_1,X_2=0}=W_{Y|X_1=0,X_2}$ and
$W_{Y|X_1,X_2=1}=W_{Y|X_1=1,X_2}$.

\subsection{Modulo-3 Binary Adder 2-MAC (BA-MAC)}

The modulo-3 binary adder channel is defined by
\begin{gather}
Y=(X_1+X_2) \mod 3 \,,
\end{gather}
where $\mathcal{Y}=\{0,1,2\}$.
The partial point-to-point channels are described by the following stochastic transition matrices:
\begin{align}
 W_{Y|X_1,X_2=0}
   &=
     \begin{bmatrix}
       1 & 0 & 0   \\
       0 & 1 & 0
     \end{bmatrix}
     , 
     &
 W_{Y|X_1,X_2=1}
   &= \begin{bmatrix}
        0 & 1 & 0 \\
        0 & 0 & 1
      \end{bmatrix}
     .
\end{align}
Clearly, every partial channel is injective on average,
as long as $p_{X_1}(1), p_{X_2}(2) \ne 0.5$.
Howevery, the input types may get arbitrarily close to the uniform
distribution.
Hence, by \cref{thm:capacityDM_MAC},
the DI capacity region is given by $R_1\leq 1$ and $R_2\leq 1$,
while the capacity region for transmission is defined by
$R_1\leq 1,\;
R_2\leq 1,\;
R_1+R_2\leq 1.5$.
The capacity regions are displayed in \cref{fig1}.
While both of DI and transmission here share the same corner points of the capacity region, the DI capacity region is markedly larger, taking a square shape, while transmission capacity region takes a triangular shape.
  
\subsection{Noisy Modulo-2 Binary Adder 2-MAC}

This channel is defined by
\begin{gather}
  Y = (X_1+X_2+Z) \mod 2 \,,
\end{gather}
where $Z \sim \Bern(q)$
and $\mathcal{Z}=\mathcal{Y}=\set{0,1}$.
The partial point-to-point channels are given by the following transition
matrices:
\begin{align}
  W_{Y|X_1,X_2=0}
  &=
    \begin{bmatrix}
      1-q & q  \\
      q & 1-q 
    \end{bmatrix}
    , \\
 W_{Y|X_1,X_2=1}
 &=
 \begin{bmatrix}
   q & 1-q  \\
   1-q & q 
 \end{bmatrix}
 .
\end{align}
By the symmetry of the channel, every partial channel is
injective on average, as for the Modulo-3 BA-MAC.
By \cref{thm:capacityDM_MAC}, the DI capacity region is
$\capDI = \set{ (R_1, R_2) : R_1 \le 1,\, R_2 \le 1 }$,
whereas the capacity region for transmission is
$\capT = \set{ (R_1, R_2) : R_1+R_2 \le 1-H_2(q) }$.
Clearly, $\capT \subseteq \capDI$,
as can be seen in \cref{fig2}, which visualizes the two capacity regions.
The corner points of $\capDI$ are outside of $\capT$, since the channel noise does not reduce the rate,
and the partial point-to-point channels are effectively orthogonal.
This is in contrast the optimal transmission scheme in which time-sharing is optimal.

\subsection{Noisy Binary Multiplier 2-MAC (BM-MAC)}

Here, we discuss a channel defined by
\begin{gather}
  Y=X_1 \cdot X_2+Z \mod 2 \,,
\end{gather}
where $Z \sim \Bern(q)$ and $\mathcal{Z}=\mathcal{Y}=\{0,1\}$.
The partial point-to-point channels are
given by the following transition
matrices:
\begin{align}
  W_{Y|X_1,X_2=0}
  &=
  \begin{bmatrix}
    1-q & q  \\
    1-q & q 
  \end{bmatrix}
  , \\
  W_{Y|X_1,X_2=1}
  &=
  \begin{bmatrix}
    1-q & q  \\
    q & 1-q 
  \end{bmatrix}
  .
\end{align}
$W_{Y|X_1,X_2=1}$ is injective, while $W_{Y|X_1,X_2=0}$ is not.
But if $q \ne 0.5$ and $p_{X_2}(0) < 1$, the average channel
$p_{Y|X_1} = p_{X_2} W_{Y|X_1X_2}$ is also injective.
Hence, $R_1 = \max_{p_{X_1}} H(X_1) = 1$ is achievable,
and by symmetry, the capacity region is the same as in \cref{fig2}.
Also, the transmission capacity region is
$\capT = \set{ (R_1, R_2) : R_1+R_2 \le 1-H_2(q) }$
(see \cref{fig2}), as for the Modulo-2 BA-MAC.

Note that, by adding a Hamming weight constraint
$
\frac{1}{n}\sum_{i=1}^{n}x_{2,i} \le \Phi_2
$
for the second channel input, the transmission
capacity of the first partial channel is reduced to
$\capT^{(1)} = \Phi_2 \cdot (1-H_2(q))$, while the DI
capacity is still $\capDI^{(1)} = 1$,
since the average channel is injective.

\section{Proof of \cref{prop:achiev}}
\label{proof:prop:achiev}

The proof is based on the proof of~\cite[Theorem~4]{ahlswedeCai1999detID}.
We add an average state constraint, similar to
\cite{csiszarNarayan1989avcClasses_capacity_decodingRules}.

Let $0 < R < \min_{p_{XX'SY} \in \cQ(p_X, \cP_S, W)} I(X'; XYS)$ and
\begin{align}
  \cQ_\delta(P, \cP_S, W) &= \set[\big]{
    p_{XX'SY} \in \cP(\cX \times \cX \times \cS \times \cY) :
    \nonumber\\&
    p_S, p_{S'} \in \cP_S,\,
    p_{Y|X} = p_S W,\, p_{Y|X'} = p_{S'} W,\,
    \nonumber\\&
    p_X = p_{X'} = P,\,
    I(X; Y | X' S) \le \delta
    }\,.
\end{align}
Then, $\cQ(p_X, \cP_S, W) = \bigcap_{\delta > 0} \cQ_\delta(p_X, \cP_S, W)$,
and by the continuity of the mutual information, there are
$\alpha, \delta > 0$ s.t.
\begin{gather} 
  R < \min_{p_{XX'SY} \in \cQ_\delta(p_X, \cP_S, W)} I(X'; XYS) - \alpha
  \,.
  \label{eq:rate}
\end{gather}

\subsection{Code construction}

By~\cite[Lemma~1]{csiszarKoerner1981avc_maxerr} or
\cite[Lemma~3]{csiszarNarayan1989avcClasses_capacity_decodingRules},
for every $\epsilon > 0$
and sufficiently large $n$,
there exists a codebook $\cC = \set{ x^n_1,\dots,x^n_M } \subseteq \cT_0^n(p_X)$
such that $M = \floor{2^{nR}}$, and
for every $x^n \in \cC$, $s^n \in \cS^n$,
and $p_{XX'S} \in \cP_n(\cX \times \cX \times \cS)$,
\begin{multline}
  \card{ \set{ m : (x^n, x_m^n, s^n) \in \cT_0^n(p_{XX'S}) } }
  \\
  \le 3 (n+1)^{\card\cX} 2^{n \max\set{ 0,\, R - I(X S; X')}}
  .
  \label{eq:symTypBound}
\end{multline}

\paragraph{Encoding}
To send the message $m \in [M]$,
the sender feeds the codeword $f(m) \coloneqq x^n_m$ into the channel.

\paragraph{Identification}
Denote the channel output by $Y^n \sim W^n(\cdot | x^n_m, s^n)$,
and suppose the receiver is interested
in the message $m' \in [M]$.
To test whether $m'$ is the sent message or not,
he uses the identification function
\begin{gather}
  \label{eq:identification}
  g(m', Y^n)
  = \ind{
    Y^n \in
    \cD_{m'}
      \setminus
      \cE_{m'}
  }
  \,,
  \\
  \cD_{m'} = \bigcup\nolimits_{p_S \in \cP_S} \cT_\epsilon^n \tup{ p_X \times p_S W | x^n_{m'} },
  \\
  \cE_{m'} = 
  \bigcup_{\ttm \ne m',\ \tts^n,\mathrlap{\ p_{XX'SY} \in \cQ_\delta(p_X, \cP_S, W)}}
  \cT_0^n \tup{ p_{XX'SY} | x^n_{m'}, x^n_{\ttm}, \tts^n }.
\end{gather}
The DI code $(f, g)$ is revealed to all senders and receivers.

\subsection{Error analysis}

\paragraph{Missed identification}
Suppose a message $m$ was sent and $m' = m$.
By~\cite[Theorem~1.2]{kramer2008multi_user_book},
for $s^n \in \cT_0^n(p_S)$,
\begin{align}
  W^n\tup{ \cD_m \,\middle|\, x^n_m, s^n }
  &\ge W^n\tup{ \cT_\epsilon^n(p_X \times p_S W | x^n_m) \,\middle|\, x^n_m, s^n }
  \nonumber
  \\
  &\ge W^n\tup{ \cT_\epsilon^n(p_{XSY} | x^n_m, s^n) \,\middle|\, x^n_m, s^n }
  \\
  &\ge 1 - 2 \card\cX \card\cS \card\cY e^{-n \epsilon^2 \mu}
  \,
\end{align}
where $\mu = \min_{x, s, y \::\: p_{XSY}(x,s,y) > 0} \, p_{XSY}(x,s,y)$,
and
\begin{align}
  &W^n (\cE_m | x^n_m, s^n)
  \le
  \max_{\tts^n}
    \card{ \set{ \ttm : (x^n, x^n_{\ttm}, \tts^n) \in \cT_0^n(p_{XX'S}) } }
  \nonumber\\& \quad
    \max_{\ttm} 
    W^n \tup{ \cT_0^n(p_{XX'SY} | x^n_m, x^n_{\ttm}, \tts^n) \,\middle|\, x^n_m, s^n }
  \\ &
    \le 3 (n+1)^{\card\cX} 2^{n \brack{ R - I(XS; X') + H(Y|XX'S) - H(Y|XS) } }
    \label{eq:typeI.bounds}
  \\ &
    = 3 (n+1)^{\card\cX} 2^{n \brack{ R - I(XS; X') - I(Y; X'|XS) } }
  \\ &
    = 3 (n+1)^{\card\cX} 2^{n \brack{ R - I(XSY; X') }}
    < 3(n+1)^{\card\cX} 2^{-n\alpha}
  \label{eq:typBound}
  \,,
\end{align}
where \cref{eq:typeI.bounds}
follows from \cref{eq:symTypBound}
and~\cite[Theorem~1.2]{kramer2008multi_user_book},
and \cref{eq:typBound} holds by \cref{eq:rate}.
Hence, for some $\beta > 0$, the error is bounded by
\begin{align}
  e(m|m, s^n)
  &= W^n \tup{ G(m, Y^n) = 0 \,\middle|\, x^n_m, s^n }
  \\& \le W^n \tup{ \cD_m^c \,\middle|\, x^n_m, s^n } + W^n \tup{ \cE_m \,\middle|\, x^n_m, s^n }
  \\&
  \le e^{-n \beta}
  \,.
  \label{eq:typeI}
\end{align}

\paragraph{False identification}
Suppose now that $m' \ne m$, and note that
we can partition $\cA = \cD_m \cap \cD_{m'} \cap \cE_m^c \cap \cE_{m'}^c$ into
subsets $\cT_0^n(p_{XX'SY} | x^n_m, x^n_{\ttm}, \tts^n)$, $\ttm \in [M]$, $\tts^n \in \cS^n$.
However, by the definition of $\cE_{m'}$,
either $\cA \cap \cT_0^n(p_{XX'SY}| x^n_{m'}, x^n_{\ttm}, \tts^n) = \emptyset$,
or $p_{XX'SY} \notin \cQ_\delta(p_S, \cP_S, W)$, i.e. $I(X; Y| X'S) > \delta$,
if $p_{Y|X} = p_S W$, $p_{Y|X'} = p_{S'} W$, and $p_S, p_{S'} \in \cP_S$.
Hence, it suffices to bound the probabilities that $Y^n \notin \cA$,
and that the joint type of $x^n_m, x^n_{m'}, s^n$, and $Y^n$
is not in $\cQ_\delta(p_S, \cP_S, W)$. The latter event is implied by
$Y^n \in \cA \cap \cT_0^n(p_{XX'SY}| x^n_{m'}, x^n_m, s^n)$. Thus,
\begin{align}
  &e(m'|m,s^n)
  \nonumber\\
  &= W^n\tup{ G(m', Y^n) = 1 \,\middle|\, x^n_m, s^n }
  \\
  &= W^n\tup{ \cD_{m'} \cap \cE_{m'}^c \,\middle|\, x^n_m, s^n }
  \\
  &\le W^n\tup{ \cD_m^c \cup \cE_m \,\middle|\, x^n_m, s^n }
  \nonumber\\ &\quad
  + W^n\tup{ \cD_m \cap \cD_{m'} \cap \cE_m^c \cap \cE_{m'}^c \,\middle|\, x^n_m, s^n }
  \\
  &\le e(m|m,s^n)
    + \sum_{p_{XX'SY} \mathrlap{\in \cP_n(\cX \times \cX' \times \cS \times \cY) \cap \cQ_\delta^c}}
    W^n\tup{ \cT^n_0(p_{XX'SY}|x^n_{m'},x^n_m \,\middle|\, x^n_m, s^n }
    \nonumber
  \\ &
  \le e^{-n\alpha}
    + (n+1)^{\card\cX^2 \card\cS \card\cY}
    2^{ n \brack{ H(Y|XX'S) - H(Y|X'S) } }
  \\ &
  = e^{-n\alpha}
    + (n+1)^{\card\cX^2 \card\cS \card\cY}
    2^{ - n I(X; Y|X'S) }
  \\ &
  \le e^{-n\alpha}
    + (n+1)^{\card\cX^2 \card\cS \card\cY}
    2^{- n\delta}
  \,.
  \label{eq:typeII}
\end{align}

By \cref{eq:rate}, \cref{eq:typeI}, and \cref{eq:typeII}, the proof is complete.
\qed

\section{Conclusions}

We determined the DI-capacity of a general $K$-MAC
for an average-error criterion, where arbitrarily large codes
are possible.
Furthermore,
we derived upper and lower bounds on the DI-capacity of a K-user multiple-access
channel, with an average input cost constraint,
for a maximal-error criterion.
The bounds coincide if all partial channels are injective, averaged over all
other partial channels.
We illustrate the results at the example of binary adder and multiplier MACs,
where the DI capacity regions strictly include the Shannon capacity regions for
message transmission.
The achievability proof interprets the MAC as a collection of arbitrarily
varying point-to-point channels (AVCs)~\cite{lapidothNarayan1998uncertainty},
where a potentially malicious jammer interferes with
the signal. Therefore, \cref{prop:achiev} yields a new achievability bound for
DI over AVCs under average input constraints.
Further investigation is needed to find exact bounds for non-injective channels,
which could also lead to general tight bounds for DI over AVCs.

\ifblind\else
\section*{Acknowledgement}

The authors
acknowledge the financial support by the Federal Ministry of
Education and Research of Germany (BMBF) in the program of “Souverän. Digital.
Vernetzt.” Joint project 6G-life, project identification number: 16KISK002.
C. Deppe and R. Ferrara were further supported in part by the
BMBF within the national initiative on Post Shannon Communication (NewCom) under the grant 16KIS1005. C. Deppe was also supported by the DFG
within the project DE1915/2-1.

\fi

\clearpage
\printbibliography

\newpage
\listoffixmes

\end{document}
